\theoremstyle{plain}
\newtheorem{thm}{Theorem}[section]
 \newtheorem{lem}[thm]{Lemma}
\newtheorem{conjecture}[thm]{Conjecture}
\theoremstyle{definition}
\newtheorem{defn}{Definition}[section]
\newcommand\footnoteref[1]{\protected@xdef\@thefnmark{\ref{#1}}\@footnotemark}
\begin{document}

%%% these all work very poorly! so we just disable mnote
% \def\mnote#1{\marginpar{{#1}}}
% \def\mnote#1{{\marginpar{{\footnotesize{#1}}}}}
% \def\mnote#1{{\marginpar{{\tiny{#1}}}}}
% \def\mnote#1{{\comment{#1}}} this one is a disaster
\def\mnote#1{{}}

\title[]{Manifold-Topology from K-Causal Order}
%% RDS: \title[]{Manifold Topology from K-Causal Open Sets}

\author{Rafael D. Sorkin$^{1, 2}$, Yasaman K. Yazdi$^{3}$, and Nosiphiwo Zwane$^{4}$}
\address{$^1$ Perimeter Institute for Theoretical Physics, 31 Caroline St. N., Waterloo ON, N2L 2Y5,
Canada}
\address{$^2$ Department of Physics, Syracuse University, Syracuse, NY 13244-1130, U.S.A.}
%\address{$^3$ Department of Physics and Astronomy, University of Waterloo, Waterloo ON, N2L 3G1,
%Canada}
\address{$^3$ Department of Physics, University of Alberta, Edmonton AB, T6G 2E1, Canada}
\address{$^4$ University of Swaziland, Private Bag 4, Kwaluseni, M201, Swaziland}
\ead{rsorkin@pitp.ca, kouchekz@ualberta.ca, ntzwane@uniswa.sz}

\begin{abstract}
To a significant extent, the metrical and topological properties of
spacetime can be described purely order-theoretically. The $K^+$
relation has proven to be useful for this purpose, and one could wonder
whether it could serve as the primary causal order from which everything
else would follow.  In that direction, we prove, by defining a suitable
order-theoretic boundary of $K^+(p)$, that in a $K$-causal spacetime,
the manifold-topology can be recovered from $K^+$.  We also state a
conjecture on how the chronological relation $I^+$ could be defined
directly in terms of $K^+$.
\end{abstract}

%Uncomment for PACS numbers title message
%\pacs{00.00, 20.00, 42.10}
% Keywords required only for MST, PB, PMB, PM, JOA, JOB? 
\vspace{2pc}
\noindent{\it Keywords}: Causal structure, Causal order, K-causality, Manifold topology

% Uncomment for Submitted to journal title message
%\submitto{\CQG}
% Comment out if separate title page not required
%\maketitle

\section{Introduction}
There is much information in the causal structure of spacetime,
including information about the topology, differentiable structure, and
metric (indeed the full metric up to a conformal factor) 
\cite{mal, hawk, Geroch}.  
Accordingly, a field of study that could be called global causal analysis
has grown up to utilize the causal structure directly, beginning with
the well known singularity theorems  \cite{HawkingEllis, Penrose2}.
In \cite{Penrose1} and \cite{Sorkin1} a positive energy theorem was
proven using arguments similar to those that feature in the proofs of
the singularity theorems.
In \cite{Martin:2004xu} the authors show that the causality relation $J^+$
turns a globally hyperbolic spacetime into a bicontinuous poset, which
in turn allows one to define an intrinsic topology called the interval
topology that turns out to be the manifold topology.
Moreover, several approaches to understanding relativity theory, some of
them as old as relativity itself, make primary use of causal structure
(\cite{aa1, aa2}).
Applications of causal structure are far-reaching and extend beyond the
description of spacetime at the classical level.  Causal set theory is
an approach to quantum gravity for which a type of causal structure is
fundamental. The causal set itself is a discrete set of elements
structured as a partial order, and its defining order-relation
corresponds macroscopically to the causal order of spacetime.
(See \cite{CS, CS1, Henson} for more details on causal set theory.)

% NATURAL TO HAVE JUST ONE BASIC ORDER RELATION 

A natural question is how far can one get with describing manifold
properties using nothing more than order-theoretic concepts. This is of
interest generally, but also in the context of theories such as causal
set theory.  Workers in other fields, like computer science, have been
interested in this question as well \cite{Martin:2004xu}.
But if one wishes to conceive of order-theoretic relationships as the
foundation of spacetime structure,
%% In such endeavors, 
it seems simplest to 
have in mind
%% identify the causal structure with 
a single order-relation, 
rather than the two or more that one sees in most presentations.\footnote%
{The two most common are $I^+$ and $J^+$.  The ``chronological future''
 of a point-event $p$, $I^+(p)$, is the set of events accessible from
 $p$ by future-directed timelike curves starting from $p$. The ``causal
 future'' $J^+(p)$ is the set of points accessible from $p$ by
 future-directed timelike or null curves starting from $p$.  The pasts
 sets, $I^-(p)$ and $J^-(p)$, are defined analogously, with future
 replaced by past.}
The relation $K^{+}$, defined in \cite{Sorkin1}, was originally
conceived for this purpose.
A closed and transitive generalization of $I^{+}$ and $J^{+}$,
it has properties that make it suitable to serve as the primary causal
order of a spacetime.
Because it is transitive (and acyclic in a $K$-causal spacetime) it
lends itself naturally to order-theoretic reasoning.  And because it is
topologically closed it avoids the problems with supremums that would
arise if one took a relation like $I^{+}$ as basic.

The relation $K^+$ was also designed to be a tool of use when one
attempts to generalize causal analysis to $C^0$ metrics that may fail to
be everywhere smooth and invertible.
In \cite{Sorkin1} Sorkin and Woolgar used $K^+$ to extend certain
results like the compactness of the space of causal curves from $C^2$
Lorentzian metrics to $C^0$ Lorentzian metrics, as needed for the
positive-energy proof in \cite{Penrose1}.
In \cite{Dowker1} Dowker, Garcia, and Surya showed that $K^+$ is robust
against the addition and subtraction of isolated points or metric
degeneracies. This allows such degeneracies to be present in metrics
that contribute to the gravitational path-integral, thus enabling one to
include in the space of histories topology-changing spacetimes
(Lorentzian cobordisms) which contain such degeneracies. This is of
course a very interesting physical consideration. For some recent articles on $K^+$ see
\cite{Saraykar} and \cite{Miller}.

% CAUSETS ALSO POINT TO JUST ONE ORDER

Another line of thought, which comes from causal sets, also points to
the desirability of a sole order-relation (and to $K^+$ as a reasonable
choice thereof).  In a fundamentally discrete context the fine
topological differences that distinguish $I$, $J$, and $K$ from each
other lose their meaning.
On the other hand, 
the discrete-continuum correspondence is normally introduced
(at the kinematic level) 
in terms of so-called  {\it\/sprinklings\/}
which
place
points at random in a Lorentzian manifold  $\mathcal{M}$ via a Poisson
process with a constant density of $\rho=1$ in natural units. 
By definition, the
probability of sprinkling $N$ points into a region with spacetime volume $V$
is $P(N) = \frac{(\rho V )^N}{N!} e^{-\rho V}$.  
This produces a causal
set whose elements are the sprinkled points and whose partial order
relation is most simply taken to be that of the manifold's causal 
order restricted to the sprinkled points. 
Since with a Poisson process, the probability for two sprinkled points
to be  lightlike related vanishes, it makes no difference
which continuum order one uses to induce an order on the sprinkled points. 
However, it is sometimes convenient to consider, instead of a random
sprinkling, something like a ``diamond lattice'' in two-dimensional
Minkowski space, in which case either $J^+$ or $K^+$ would be the
most useful choice. 
In all such cases one loses nothing by thinking of $K^+$ as the basic
continuum-order. 
One sees again how it is more natural to work with only one causal
relation, i.e. one does not distinguish between lightlike and timelike
related pairs of elements, only between causally related and unrelated
pairs.

%% In this sense, working with $K^\pm$ in place of both $I^\pm$ and $J^\pm$
%% is in the same spirit as the causal set program's usage of order.

If $K^{+}$ is to be taken as primitive, then it must be possible in
particular to recover the manifold-topology from it, something which was
not addressed in \cite{Sorkin1}.  We could perhaps obtain the
manifold-topology indirectly by first defining $I^{+}$ in terms of
$K^{+}$, but we will not do this herein (although we will provide a
conjecture suggesting how it might be done).  Instead we proceed
directly from $K^{+}$ to the topology by defining an
{\it\/order-theoretic boundary\/} of $K^{+}(p)$ and demonstrating that
it coincides with the topological boundary. Removing it, we obtain a
family of open sets $A(p,q)$ from which the topology can be reconstructed.

We begin in Section \ref{two} by reviewing the definition and properties
of $K^\pm$.  We then introduce the derived sets $A^\pm(p)$, which we use
throughout this paper.  In Section \ref{three} we show that $A^\pm(p)$
is open and locally equivalent to $I^\pm(p)$.  From this it follows that
the order-interval $A(p,q)$ is locally equal to $I(p,q)$, which
completes the proof that the $K$-causal order yields the manifold
topology.  In Section \ref{four} we state a conjecture for how to obtain
$I^+(p)$ more directly from $K^+(p)$.  The appendix collects the lemmas
from \cite{Sorkin1} that we use in this paper.

Our results do not depend on the spacetime-dimension of the manifold
$\mathcal{M}$ that we work in.

\section{K-Causality}
\label{two}
\subsection{Some Definitions}
\begin{defn}
  $K^+ := \,\prec$ (respectively $K^-$) is the smallest relation that
  contains $I^+$ (respectively $I^-$) and is both transitive\footnote%
{$p \prec q\prec q \prec r\Rightarrow p \prec r$}
and topologically closed \cite{Sorkin1}. 
\end{defn}

\noindent Regarded as a subset of $\mathcal{M}\times\mathcal{M}$, the relation
$K^+$ can be obtained by intersecting all the closed and transitive sets
$R_i$ that include $I^+$ \cite{Sorkin1}:
\begin{equation}
  K^+  = \bigcap_i R_i, \indent \textit{where} \indent  \forall i, I^+\subset R_i\,.
\label{kdef}
\end{equation}
By $K^+(p,\mathcal{M})$ or $K^+(p)$ we denote all the points $q$ such
that $p \prec q$, where $p,q \in \mathcal{M}$.  Let $\mathcal{O}$ be an
open subset of $\mathcal{M}$. For $q\in K^+(p,\mathcal{O})$ we write
$p\prec_\mathcal{O} q$. Figure \ref{example} shows an example of
$K^+(p)$ and how it differs from $I^+(p)$ and $J^+(p)$. In the figure,
$q,r,s \in K^{+}(p)$ while $q\in J^{+}(p)$, $r,s \notin J^{+}(p)$ (and
also $q\in I^{+}(p)$, $r,s \notin I^{+}(p)$).

An open set is $K$\emph{-causal} if and only if $\prec$ induces a
(reflexive) partial ordering on it, in other words iff $\prec$
restricted to the open set is asymmetric.\footnote
{Minguzzi has proven \cite{Minguzzi} that $K$-causality coincides with
  stable causality, and that in consequence, $K^+$ coincides with the
  the Seifert relation $J^+_S$ provided that $K$-causality is in force.}

\begin{figure}
\centering
\begin{tikzpicture}

\draw (0,0) -- (-2,2)

(0.2,-0.07) node {p};

\filldraw [black] (0,0) circle (2pt);

\draw (0,0) -- (4,4.01)

(-0.1,1) node {q};

\filldraw [black] (-0.3,1) circle (2pt);

\draw [decorate,decoration=snake] (-3.1,2) -- (1,2);

\draw (1.05,2.05) -- (-1,4)

(1.05,2) node {{\tiny o}}

(0.2,3) node {r};

\filldraw [black] (0.05,3) circle (2pt);

\draw [decorate,decoration=snake] (-1,4) -- (5.2,3.95);

\draw (-1.05,4.05) -- (-3,6)

(-1,4) node {{\tiny o}};

\draw (-1,4.05) -- (1,6)

(-0.5,5) node {s};

\filldraw [black] (-0.7,5) circle (2pt);

\end{tikzpicture}
\caption{An example of $K^+(p)$. The wiggly lines are removed from the manifold. $q,r,s \in K^{+}(p)$ while $q\in J^{+}(p)$, $r,s \notin J^{+}$.}
\label{example}
\end{figure}
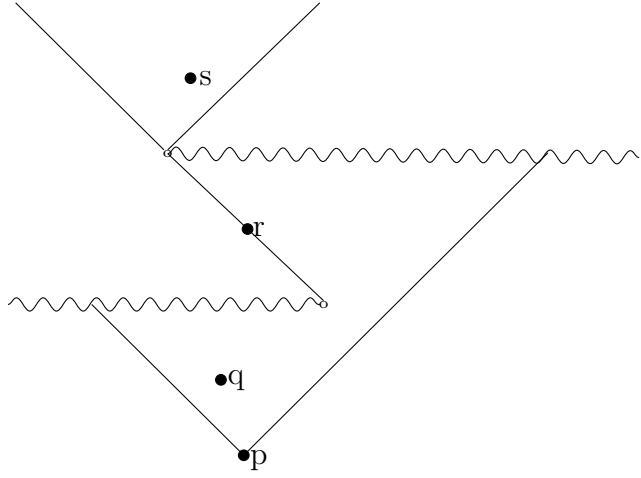

\begin{defn}
The order theoretic or causal boundary of $K^+(p)$, denoted $\partial
K^+(p)$, is the set indicated by
\begin{equation}
  \partial K^+(p)=\{x\in K^+(p)|x=\sup\{y_i\}\, \textrm{for some}\, y_i\notin K^+(p)\}.
  \label{bkdef}
\end{equation}
\end{defn}
\noindent More precisely, a point $x$ of $K^+(p)$ belongs to $\partial K^+(p)$ iff it is the supremum with
respect to $\prec$ of an increasing sequence\footnote%
{Instead of ``increasing sequence'', one could say ``directed set''}
of points $y_i$ all belonging to the complement of $K^+(p)$.
(Intuitively this says that one can approach $x$ arbitrarily closely
from outside of $K^+(p)$, which is a precise order-theoretic counterpart
of the topological definition of boundary.)

\mnote{RDS: the definition given originally lacked the word "increasing", i think we need it but cannot recall why just now}

We next define the open sets $A^+(p)$ which we will use in the next
section to recover the manifold topology.
\begin{defn}
$A^+(p)$ is the set $K^+(p)$ without its causal boundary defined in \eref{bkdef}:
\begin{equation}
    A^+(p) = K^+(p)\backslash \partial K^+(p).
    \label{aplusdef}
\end{equation}
\end{defn}
By definition, $A^+(p)$ is open in the order-theoretic sense. We will
see in the next section that the order-theoretic boundary coincides with
the topological one, therefore making $A^+(p)$ open in the topological
sense as well.

\begin{defn}
The set $A(p,q)$ is 
\begin{equation}
    A(p,q)=A^+(p)\cap A^-(q).
    \label{apq}
\end{equation}

\noindent The set $A(p,q)$ is a kind of ``open order-interval''.  Figure \ref{apq2} shows an example.

\end{defn}

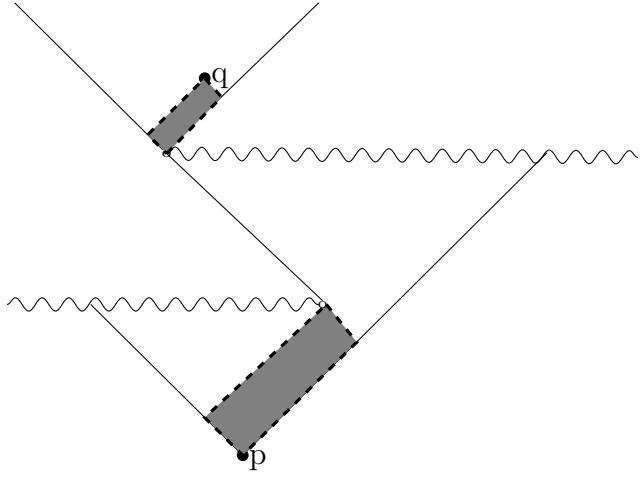
\begin{figure}

\centering

\begin{tikzpicture}
\draw (0,0) -- (-2,2)
(0.2,-0.07) node {p};
\filldraw [black] (0,0) circle (2pt);
\draw (0,0) -- (4,4.01);
\draw [decorate,decoration=snake] (-3.1,2) -- (1,2);
\draw[fill=gray]  (1.1,2) --(1.5,1.5)[dashed,very thick] -- (0,0) -- (-0.5,0.5) -- cycle;
\draw (1.05,2.05) -- (-1,4)
(1.05,2) node {{\tiny o}};
\draw [decorate,decoration=snake] (-1,4) -- (5.2,3.95);
\draw (-1.05,4.05) -- (-3,6)
(-1,4) node {{\tiny o}};
\draw (-1,4.05) -- (1,6)
(-0.3,5) node {q};
\filldraw [black] (-0.5,5) circle (2pt);
\draw[fill=gray]  (-1,4) --(-0.28,4.75)[dashed,very thick] -- (-0.5,5) -- (-1.25,4.25)  -- cycle;
\end{tikzpicture}

\caption{The region in gray is an example of a K-causal open interval $A(p,q)$.}
\label{apq2}
\end{figure}

\section{Manifold Topology from the sets $A(p,q)$}
\label{three}
Let us demonstrate that the sets $A(p,q)$ are locally the same as the
order-intervals $I(p,q)$.  To that end, we first prove that the future-
and past- sets $A^{\pm}(p)$ are topologically open and  locally equivalent to $I^{\pm}(p)$.
% We then prove that the interval $A(p,q)$ is locally the same as $I(p,q)$. 

To show that $A^+(p)$ is open, it  suffices to show that the
causally defined boundary $\partial K^+(p)$ that we removed from
$K^+(p)$ to yield $A^+(p)$ is also a topological boundary.

{\it  We will assume henceforth that  } $\mathcal{M}$ {\it   is $K$-causal and without boundary. } 

\begin{lem}
$x = \sup {y_i} \Rightarrow x=\lim_{i\rightarrow \infty} \{y_i\}$ in the
  topological sense. Hence if $x$ is (with respect to $K^+$) the
  supremum of an increasing sequence (or directed set) of points $y_i$
  then $x$ is also the limit of $\{y_i\}$ with respect to the manifold
  topology. 
\label{top}
\end{lem}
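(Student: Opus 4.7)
The plan is to argue by contradiction. Suppose $y_i \not\to x$ in the manifold topology; then there is an open neighborhood $V$ of $x$ and a subsequence $\{y_{i_k}\}$ entirely contained in $\mathcal{M}\setminus V$. I aim to derive a contradiction by producing a subsequential limit $y^{*}\in\mathcal{M}$ of $\{y_{i_k}\}$ that must coincide with $x$.

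Once such a $y^{*}$ is in hand, the contradiction follows quickly. From topological closedness of $K^+$ applied to $y_{i_k}\prec x$, I obtain $y^{*}\prec x$. From the increasing property, $y_j\prec y_{i_k}$ for all $k$ with $i_k\ge j$, so closedness again gives $y_j\prec y^{*}$; hence $y^{*}$ is an upper bound of the entire sequence $\{y_i\}$. The supremum property then yields $x\prec y^{*}$, and the antisymmetry of $\prec$ on a $K$-causal spacetime (noted in the definition of $K$-causality in Section~\ref{two}) forces $y^{*}=x$, contradicting $y^{*}\in\mathcal{M}\setminus V$.

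The crux is therefore to produce $y^{*}$, i.e., to rule out the possibility that the subsequence escapes every compact subset of $\mathcal{M}$. A partial confinement is immediate: since $K$-causality is equivalent to stable causality by the Minguzzi result cited in the footnote, there exists a continuous temporal function $t$, and the closed transitive relation $\{(p,q):t(p)\le t(q)\}$ contains $I^+$ and hence contains $K^+$; so the $t$-values of the sequence lie in the compact interval $[t(y_1),t(x)]$. To upgrade this $t$-slab confinement to genuine relative compactness, I would work in a convex normal neighborhood of $x$ and invoke the appropriate $K^+$-control lemmas of Sorkin--Woolgar collected in the appendix, essentially using the supremum property to force the tail of $\{y_i\}$ into arbitrarily small order-theoretic intervals around $x$. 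This final confinement step is the main obstacle; everything else is a short, formal consequence of closedness plus antisymmetry.
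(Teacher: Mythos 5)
Your formal skeleton (closedness of $\prec$ gives $y^{*}\prec x$ and $y_j\prec y^{*}$; the supremum property gives $x\prec y^{*}$; $K$-causality gives antisymmetry, hence $y^{*}=x$) is exactly the kind of reasoning the paper uses in its endgame, but the step you yourself flag as ``the main obstacle'' is in fact the whole content of the lemma, and your sketch does not supply it. There is no a priori reason the $y_i$ should have a subsequential limit at all: the $t$-slab $\{t(y_1)\le t\le t(x)\}$ is not compact (it is spatially unbounded), so stable causality buys you nothing here, and your fallback --- ``using the supremum property to force the tail of $\{y_i\}$ into arbitrarily small order-theoretic intervals around $x$'' --- is essentially the statement you are trying to prove (topological convergence to $x$), so as written the argument is circular. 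A purely order-theoretic supremum gives no direct control on where the $y_i$ sit in the manifold; some geometric input is needed to intercept them near $x$, and that input is missing from your proposal.

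The paper's proof supplies precisely this missing ingredient, and in a way that avoids ever needing compactness control on the $y_i$ themselves. Take an arbitrarily small $K$-convex neighborhood $U$ of $x$ (Lemma \ref{lem16}), with compact frontier $B=\mathrm{Fr}(U)$. If the $y_i$ never enter $U$, then each relation $y_i\prec x$ crosses $B$, so Lemma \ref{lem14} produces points $z_i\in B$ with $y_i\prec z_i\prec x$. It is the $z_i$, living in the compact set $B$, that admit a convergent subsequence $z_i\to z_\infty\in B$; then closedness of $\prec$ gives $z_\infty\prec x$ and $y_i\prec z_\infty$ for all $i$, the least-upper-bound property gives $x\prec z_\infty$, and antisymmetry forces $x=z_\infty\in B$, contradicting $x\in U$ with $U\cap B=\varnothing$. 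Hence the $y_i$ eventually enter $U$, and $K$-convexity of $U$ (together with $y_i\prec y_j\prec x$) keeps them there, which is convergence. If you replace your confinement step by this frontier-interception argument, your proof becomes the paper's proof; without it, the gap is genuine.
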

\begin{proof}
Let the increasing sequence $y_1 \prec y_2 \prec y_3 ...$ have
$x$ as its supremum, and let $U$ be an arbitrarily small open
neighborhood of $x$. From Lemma \ref{lem16} we can assume without loss of
generality that $U$ is K-convex.\footnote%
{A set or subset is said to be K-convex if and only if it contains the
  closed interval $K(p,q)=K^+(p)\cap K^-(q)$ between every pair of its
  elements.}
It suffices to prove that the $y_i$ eventually enter $U$ (after which they
will necessarily remain in $U$ because of the latter's convexity, and
because $y_i \prec x$). 

Now there are two possibilities.  Either the $y_i$ enter $U$ or they
don't. In the former case we are done, so suppose the latter case, and
let $B = \mathrm{Fr}(U)$ be the topological (not causal) boundary of
$U$. The symbol `Fr' stands for ``Frontier'', and we follow the convention of
\cite{dieudonn} in referring to the topological boundary this way.  Since
$U$ is arbitrarily small and $B$ is closed, we can assume without loss of generality that
it is compact.  By Lemma \ref{lem14} there then exist points $z_i$ in $B$
such that ($\forall i$) $y_i \prec z_i \prec x$.  

Recalling that $B$ is compact, and passing if necessary to a
subsequence, we can assume without loss of generality that the $z_i$
converge to some point $z_\infty$ of $B$.  Then since the relation
$\prec$ is closed, and since all of the $z_i \prec x$, we see
immediately that $z_\infty \prec x$.  
On the other hand, if (for fixed $i$) $k > i$,
then we have $y_i \prec y_k \prec z_k$, from which follows $y_i \prec z_k$.  
Therefore we see in the same way from $y_i \prec z_k$  that $y_i
\prec z_\infty$ $\forall i$. 
But by the definition of supremum, $x$ is the least point
of $\mathcal{M}$ for which this holds, hence $x \prec z_\infty$, which
together with $z_\infty \prec x$ implies that they are equal. This is a
contradiction, since $x\in U$ where $U$ is an open set, whereas
$z_\infty\in B$ and $B\cap U=\varnothing$. Therefore $\{y_i\}$ must
enter $U$ and in doing so, they converge (topologically) to $x$, as
desired. 

% We have demonstrated that either the $y_j$ converge to $x$ or if they don't then they still converge to $x$.
\end{proof}

\begin{lem}
  Every point $p$ of $\mathcal{M}$ has a neighborhood in which $J^+$ and $K^+$ agree.
\label{lem2}
\end{lem}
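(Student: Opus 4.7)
My plan is to exploit the fact that in a sufficiently small neighborhood of $p$, the relation $J^+$ is already topologically closed and transitive, so that by the minimality characterisation \eref{kdef} of $K^+$ there is no room for $K^+$ to strictly extend $J^+$ there.

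Concretely, I would first choose an open neighborhood $U$ of $p$ with three properties: (i) $U$ lies in a convex normal neighborhood of $p$; (ii) $U$ is itself globally hyperbolic when equipped with the induced metric; and (iii) $U$ is K-convex. The first two can be arranged by standard Lorentzian geometry around any point, while K-convexity can be imposed in addition via Lemma \ref{lem16}, shrinking $U$ if necessary. Inside such a $U$, the intrinsic causal relation $J^+_U$ (induced by future-directed causal curves contained in $U$) is closed in $U \times U$ and transitive, and contains $I^+_U$; this is a classical local fact of Lorentzian geometry.

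Next, I would argue that $K^+$ restricted to $U \times U$ coincides with $J^+_U$. For one direction, $J^+ \subseteq K^+$ holds globally: $K^+$ is closed and contains $I^+$, so it contains $\overline{I^+}$, which in turn contains $J^+$ by the strong causality implied by $K$-causality. For the opposite direction, I would construct from $J^+_U$ a global closed transitive relation $R \supseteq I^+$ whose restriction to $U \times U$ is exactly $J^+_U$. The idea is to combine $J^+_U$ on $U \times U$ with $K^+$ on the remainder of $\mathcal{M}\times\mathcal{M}$, possibly together with the compositions needed to preserve transitivity across the boundary of $U$, and then verify that the resulting relation is still closed. Here K-convexity of $U$ is exactly what rules out $K^+$-chains that enter and leave $U$ and would otherwise force $R|_{U \times U}$ to be strictly larger than $J^+_U$. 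The minimality of $K^+$ then gives $K^+ \subseteq R$ and therefore $K^+|_{U \times U} \subseteq J^+_U$.

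The principal difficulty I anticipate lies in this last construction: one must verify carefully that the glued relation $R$ is both closed in $\mathcal{M} \times \mathcal{M}$ and transitive, while agreeing with $J^+_U$ on $U \times U$. K-convexity of $U$ is the crucial hypothesis that makes this gluing go through; the remaining steps are routine appeals to standard local Lorentzian geometry and to the definition of $K^+$.
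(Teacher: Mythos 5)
Your proposal is essentially correct, but it takes a genuinely different route from the paper at the key step. The paper, after invoking Lemma \ref{lem16} for a $K$-convex $U$, simply cites the Sorkin--Woolgar localization lemmas (Lemmas \ref{lem12}--\ref{lem13}): for a $K$-convex $U$, the relation $K^+$ restricted to $U$ coincides with $K^+$ relativized to $U$, and then minimality of the \emph{relativized} $K^+(U)$ against the closed, transitive $J^+(U)\supseteq I^+(U)$ finishes the argument. You instead work with minimality of the \emph{global} $K^+$ and prove the localization yourself by gluing $J^+_U$ on $U\times U$ to $K^+$ on the rest of $\mathcal{M}\times\mathcal{M}$; this amounts to re-deriving the special case of Lemma \ref{lem13} that is needed, so your argument is more self-contained while the paper's is shorter because SW have already done that work. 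Two remarks on your sketch: first, the step you flag as the principal difficulty --- closedness of the glued relation $R=(K^+\setminus(U\times U))\cup J^+_U$ --- is actually the easy part, since $R^c=(K^+)^c\cup\bigl[(U\times U)\setminus J^+_U\bigr]$ is a union of two open sets (no sequence lying outside the open set $U\times U$ can converge into it, so there is no delicate boundary matching); the places where $K$-convexity genuinely carries the load are (i) transitivity, where a chain $x\prec y\prec z$ with $x,z\in U$, $y\notin U$ is excluded because $y\in K(x,z)\subseteq U$, so no composite pairs need to be added and $R\cap(U\times U)$ remains exactly $J^+_U$, and (ii) the inclusion $I^+\subseteq R$, where a global timelike curve between points of $U$ is trapped in $U$ because its points lie in $K(x,y)\subseteq U$, giving $(x,y)\in I^+_U\subseteq J^+_U$. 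Second, a minor quibble: the inclusion $J^+\subseteq\overline{I^+}\subseteq K^+$ needs no appeal to strong causality or $K$-causality; it holds in any spacetime with a smooth metric by pushing the past endpoint slightly to the chronological past. With these verifications written out, your gluing-plus-minimality argument is a valid alternative proof of the lemma.
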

\begin{proof}
Given our standing assumption that $\mathcal{M}$ is $K$-causal, Lemma
\ref{lem16} tells us that $p$ has an arbitrarily small $K$-convex neighborhood $U$, 
and then Lemmas \ref{lem12}-\ref{lem13} (with $\mathcal{O}=\mathcal{M}$)
tell us that $K^+$ restricted to $U$
coincides with $K^+$ relativized to $U$ (i.e. computed as if $U$ were the whole
spacetime).  
But we know that in a small enough $U$, the relation $J^+$ is
closed\footnote%
{\label{prop}See Proposition 4.5.1 in  \cite{HawkingEllis} for a proof of this.}, 
transitive, and includes $I^+$.  
It follows  (from the definition of $K^+$) that $J^+(U)$ includes $K^+(U)$, 
and therefore coincides with it (since\footnoteref{prop}
$J^+(U)$ is the closure of $I^+(U)$). 
Notice here, that we can take $J^+$ to be $J^+(U)$.
\end{proof}

\mnote{RDS: Do we need to add here (where we say that J+ is transitive) that U is convex with
 respect to J+, which follows from K-convexity?   Not sure we need, and
 not sure it's worth changing if we do.  Anyhow would be better to prove in general that $J+ \subseteq I+$}

\begin{thm}
The topological boundary of $K^+(p)$ equals its causal
 boundary $\partial K^+(p)$, $\forall p \in \mathcal{M }$.
\label{obtb}
\end{thm}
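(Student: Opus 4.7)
The plan is to prove the equality by establishing both inclusions between $\partial K^+(p)$ and the topological frontier $\mathrm{Fr}(K^+(p))$. The inclusion $\partial K^+(p) \subseteq \mathrm{Fr}(K^+(p))$ falls out of Lemma \ref{top} almost for free: if $x = \sup\{y_i\}$ with $y_i \notin K^+(p)$, then $y_i \to x$ topologically by the lemma, so $x$ lies in the topological closure of $\mathcal{M} \setminus K^+(p)$; combined with $x \in K^+(p)$ (and the fact that $K^+(p)$ is itself topologically closed, being the fibre over $p$ of the closed relation $K^+$), this places $x$ on the frontier.

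The reverse inclusion is the substantive direction. Given $x \in \mathrm{Fr}(K^+(p))$, I would pick any future-directed timelike curve $\gamma:[0,1] \to \mathcal{M}$ with $\gamma(1) = x$ and extract the required sequence from it. The key observation is that no past tail of $\gamma$ can lie entirely in $K^+(p)$: if $\gamma([s_0, 1)) \subseteq K^+(p)$ for some $s_0 < 1$, then $\gamma(s_0) \in K^+(p)$ and transitivity of $\prec$ (together with $I^+ \subseteq K^+$) would force $I^+(\gamma(s_0)) \subseteq K^+(p)$. Since $\gamma|_{[s_0,1]}$ is a timelike curve from $\gamma(s_0)$ to $x$, we have $x \in I^+(\gamma(s_0))$, so $x$ would lie in $\mathrm{int}(K^+(p))$, contradicting $x \in \mathrm{Fr}(K^+(p))$. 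Hence we can extract a strictly increasing sequence $s_n \uparrow 1$ with $y_n := \gamma(s_n) \notin K^+(p)$.

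It remains to verify that $x = \sup\{y_n\}$ with respect to $\prec$. Timelikeness of $\gamma$ gives $y_n \ll y_{n+1}$ and $y_n \ll x$, so $\{y_n\}$ is $\prec$-increasing and bounded above by $x$. For any other upper bound $z$, we have $y_n \prec z$ for all $n$; combining $y_n \to x$ topologically with the closedness of the relation $\prec$ yields $x \prec z$, so $x$ is indeed the least upper bound and therefore belongs to $\partial K^+(p)$. I expect no serious obstacle beyond the transitivity argument that rules out a $K^+(p)$-tail of $\gamma$; this hinges cleanly on $I^+$ being open and contained in the transitive relation $K^+$, and everything else is bookkeeping using Lemma \ref{top} and the topological closedness of $K^+$.
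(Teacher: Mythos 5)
Your proof is correct and follows essentially the same route as the paper's: the forward inclusion via Lemma \ref{top} together with the closedness of $K^+(p)$, and the reverse inclusion by producing a timelike increasing sequence approaching $x$ from outside $K^+(p)$ (ruling out chronological predecessors of $x$ inside $K^+(p)$ through transitivity, $I^+\subseteq K^+$, and the openness of $I^+$) and then invoking the topological closedness of $\prec$ to get the least-upper-bound property. The only cosmetic difference is that you extract the sequence along a chosen timelike curve ending at $x$, whereas the paper shows $I^-(x)\cap K^+(p)=\varnothing$ and takes the sequence in $I^-(x)$, additionally citing $K$-convex neighbourhoods and the local equality of $K^+$ with $J^+$, which your leaner argument does without.
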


\begin{proof}
We use the criterion that $x$ is in 
Fr$(K^+(p))$, 
the topological boundary of $K^+(p)$,  
iff
every neighborhood of $x$ contains points both inside and outside of $K^+(p)$.

First let's show that $\partial K^+(p) \subseteq$ Fr$(K^+(p))$. 
Let $x$ be any point of $\partial K^+(p)$.
Since $x$ itself is in $K^+(p)$, it suffices to show that every neighborhood of $x$
contains points in the complement of $K^+(p)$.  This follows directly from
Lemma \ref{top} and the definition of $\partial K^+(p)$.

Conversely let us show that Fr$(K^+(p)) \subseteq \partial K^+(p)$.  
Let $x$ be any point of Fr$(K^+(p))$.

First of all, we easily check that $x\in K^+(p)$, because the
topological boundary of any set lies within its closure, 
and because $K^+(p)$ is closed.

Now let $U$ be a small $K$-convex open neighborhood of $x$ (which exists by Lemma \ref{lem16}).
By Lemmas \ref{lem12}-\ref{lem13}, we can reason as if $U$ were all of $\mathcal{M}$.
And we also know from Lemma \ref{lem2} that if $U$ is sufficiently small
then within it, $K^+$ and $J^+$ coincide.

It's also clear that $I^-(x)$ must be disjoint from $K^+(p)$. Otherwise choose
any $z \in K^+(p) \cap I^-(x)$ and notice that (since $I^+ \subseteq K^+$)
$I^+(z) \subseteq K^+(p)$ would be an open neighborhood of $x$ in $K^+(p)$, hence $x$
would be in $K^+(p)$'s topological interior and not in Fr$(K^+(p))$.

Obviously $I^-(x)$ will contain a timelike increasing sequence of points $y_i$
converging topologically to $x$, 
and this sequence will be disjoint from $K^+(p)$ since $I^-(x)$ is.  
It remains to be proven that $x = \sup \{y_i\}$.

\def\implies{\Rightarrow}

That $x$  bounds the $y_i$ from above is obvious.  
And because the relation `$\prec$' is topologically closed, we have for
any other upper bound $z$ that 
$z\succ y_i \to x \ \implies z \succ x$.
%% $z\succ y_i, \ y_i \to x \ \implies z \succ x$.
%
Therefore $x$ is a least upper bound, as required. 

%% But with $K^+ = J^+$ it's also clear that any upper bound of the $y_i$
%% must lie in $J^+(x) = K^+(x)$.

\end{proof}

\begin{lem}
  $A^{+}(p)$ is open.  In fact it is the interior of $K^+(p)$.
  \label{aopen}
\end{lem}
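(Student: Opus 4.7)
The plan is to deduce this lemma as an essentially immediate corollary of Theorem \ref{obtb}, which already identifies the order-theoretic boundary $\partial K^+(p)$ with the topological frontier $\mathrm{Fr}(K^+(p))$. Almost all of the work has already been done; what remains is a short point-set argument converting the equality of boundaries into the equality of interiors.

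First I would note that $K^+(p)$ is a topologically closed subset of $\mathcal{M}$. This follows because $K^+$ is by definition a closed subset of $\mathcal{M}\times\mathcal{M}$ and $K^+(p)$ is its slice at the first coordinate $p$, i.e.\ the preimage of the closed set $K^+$ under the continuous map $q\mapsto(p,q)$. For any closed set $S$ in a topological space one has $\mathrm{Fr}(S)=S\setminus\mathrm{int}(S)$, and hence
\begin{equation}
\mathrm{int}(S)=S\setminus\mathrm{Fr}(S).
\end{equation}

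Applying this identity with $S=K^+(p)$ and invoking Theorem \ref{obtb} to replace the topological frontier by the causal boundary, I obtain
\begin{equation}
\mathrm{int}(K^+(p))=K^+(p)\setminus\mathrm{Fr}(K^+(p))=K^+(p)\setminus\partial K^+(p)=A^+(p),
\end{equation}
where the last equality is just the definition \eref{aplusdef} of $A^+(p)$. Thus $A^+(p)$ coincides with the topological interior of $K^+(p)$ and is in particular open. Since the real content of the lemma is the identification of the order-theoretic and topological boundaries, and since that has already been established, there is no genuine obstacle here; the argument is purely a bookkeeping step. An analogous argument (with futures replaced by pasts) handles $A^-(p)$, so that $A(p,q)=A^+(p)\cap A^-(q)$ is automatically open as well.
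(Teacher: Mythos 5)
Your argument is correct and is essentially the same as the paper's: both deduce the lemma from Theorem \ref{obtb} by noting that deleting the (now topological) boundary from $K^+(p)$ yields its interior. The only cosmetic difference is that you invoke closedness of $K^+(p)$ to justify $\mathrm{int}(S)=S\setminus\mathrm{Fr}(S)$, whereas this identity holds for an arbitrary subset $S$, which is all the paper uses.
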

\begin{proof}
As defined above in \eref{aplusdef}, $A^+(p)$ is what remains of $K^+(p)$
after we remove its order-theoretic boundary $\partial K^+(p)$. 
But in the theorem just proven we have seen that $\partial K^+(p)$ is
also the topological boundary of $K^+(p)$,
and removing the topological boundary of
any set whatsoever produces its interior, which  by definition is open.
\end{proof}

\def\M{\mathcal{M}}

It follows immediately that the sets $A(p,q)$ are open for all $p$ and $q$. 
We claim furthermore that $A^{+}(p)$ coincides locally with $I^{+}(p)$, 
whence $A(p,q)$ coincides locally with $I(p,q)$.
This follows from Lemmas \ref{lem2} and \ref{aopen}, which inform us that
locally
$A^{+}(p)$ is the interior of $K^{+}(p)$ and  
$K^{+}(p)=J^{+}(p)$. 
To establish our claim, then, simply notice that 
locally 
the interior of $J^{+}(p)$
is $I^{+}(p)$.\footnote%
{This follows from the discussion on pages 33-34 and 103-105 of \cite{HawkingEllis},
 which establishes that the exponential map at $p$ induces a
 diffeomorphism between a neighborhood of $p$ in $\M$ and a neighborhood
 of $0$ in Minkowski-space which preserves the sets
 $I^{+}(p)$ and $J^{+}(p)$.}

We now have all the pieces we need to derive the manifold topology from
the order `$\prec$',
which we do by proving that the open sets $A(p,q)$ 
furnish a basis for the topology of $\M$ 
in the sense that every open subset of $\M$ 
is a union of sets of the form $A(p,q)$.
For this, it suffices in turn that:

\noindent{\bf Criterion:}
for any $x\in\M$ and any open set $U$ containing $x$, we can find $p$
and $q$ such that $x\in A(p,q)\subseteq U$.

\noindent Clearly this criterion is local in the sense that it's enough
for it to hold for $U$ being arbitrarily small.

\begin{thm}
 The sets $A(p,q)$ furnish a basis for the manifold-topology.

%% Recovery of the manifold topology: 
\end{thm}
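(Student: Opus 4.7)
The plan is to verify the \textbf{Criterion} stated immediately before the theorem: given any $x \in \mathcal{M}$ and any open neighborhood $U$ of $x$, produce points $p$ and $q$ with $x \in A(p,q) \subseteq U$. Since the criterion is explicitly local, I may shrink $U$ freely. The strategy combines three ingredients already in hand: the local equality $A^{\pm}(\cdot) = I^{\pm}(\cdot)$ noted in the paragraph preceding the theorem; the existence of arbitrarily small $K$-convex open neighborhoods supplied by Lemma \ref{lem16}; and the classical fact that the chronological diamonds $I(p,q)$ form a basis for the manifold topology inside a convex normal neighborhood, which comes from the Hawking--Ellis exponential-map description cited in the paper.

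First I would use Lemma \ref{lem16} to choose an open $V \ni x$ with $V \subseteq U$ which is $K$-convex, is small enough that $K^{+}$ and $J^{+}$ coincide on it (Lemma \ref{lem2}), and sits inside a convex normal neighborhood so the standard Lorentzian-geometry picture applies. Inside $V$ I would then pick $p \in I^{-}(x)$ and $q \in I^{+}(x)$ close enough to $x$ that the chronological interval $I(p,q)$ is entirely contained in $V$; this is precisely the usual construction that makes $I(p,q)$ a basis in a convex normal neighborhood.

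The crux of the argument is to confine the globally defined set $A(p,q)$ inside $V$, and this is where $K$-convexity plays a decisive role, and is the step I expect to be the main obstacle. Because $p, q \in V$ and $V$ is $K$-convex, the closed order-interval $K^{+}(p) \cap K^{-}(q)$ lies entirely in $V$; hence $A(p,q) = A^{+}(p) \cap A^{-}(q) \subseteq K^{+}(p) \cap K^{-}(q) \subseteq V$. Invoking the local equivalence $A^{\pm} = I^{\pm}$ inside $V$ then yields $A(p,q) = A(p,q) \cap V = I(p,q) \cap V = I(p,q)$, so $x \in A(p,q) \subseteq V \subseteq U$, verifying the Criterion. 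Without the $K$-convexity of $V$ one would have no a priori control over possible stray components of $A(p,q)$ outside $U$, even though $A(p,q)$ would still agree with $I(p,q)$ near $x$; once the confinement step is in place, the remainder of the argument is essentially the classical basis argument for $I(p,q)$ transported to $A(p,q)$ through the local equivalence.
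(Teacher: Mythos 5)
Your proof is correct, but it takes a genuinely different route from the paper's. The paper's proof is global-to-local: it invokes Minguzzi's theorem that $K$-causality implies stable (hence strong) causality, so that the manifold topology coincides with the Alexandrov topology and the $I(p,q)$ form a basis outright; it then transfers this to the $A(p,q)$ by observing that the basis criterion is purely local and that $A(p,q)$ coincides locally with $I(p,q)$. You instead verify the criterion directly and locally: Lemma \ref{lem16} supplies a small $K$-convex $V\subseteq U$ on which Lemma \ref{lem2} gives $K^{+}=J^{+}$, you choose $p\in I^{-}(x)\cap V$, $q\in I^{+}(x)\cap V$, and the $K$-convexity of $V$ confines $A(p,q)\subseteq K^{+}(p)\cap K^{-}(q)\subseteq V$, after which the local equivalence $A^{\pm}=I^{\pm}$ (Lemmas \ref{lem2} and \ref{aopen} plus the exponential-map picture) gives $A(p,q)=I(p,q)\ni x$. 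What each approach buys: the paper's argument is shorter but leans on the cited strong-causality/Alexandrov result, and its appeal to ``local coincidence'' leaves implicit exactly the point you isolate — that $A(p,q)$, being globally defined, might a priori have pieces outside $U$; your confinement step via $K$-convexity handles this explicitly and makes the verification self-contained, needing only the local Lorentzian-geometry fact that chronological diamonds form a basis inside convex normal neighborhoods rather than the global Alexandrov-topology theorem.
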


\begin{proof}

Since $K$-causality is in force, strong causality also holds
\cite{Minguzzi}, whereby the manifold-topology is the same as the
Alexandrov topology, for which by definition the sets $I(p,q)$ are a
basis. (See theorem 4.24 in \cite{rogerSIAM}.)
But because locally $I(p,q)=A(p,q)$, as we have just seen,
the sets $A(p,q)$ are also a basis.

To summarize: because the $I(p,q)$ constitute a basis they satisfy the
criterion stated above, and because this criterion is purely local, the
sets $A(p,q)$, which locally coincide with the $I(p,q)$, also satisfy it.

\end{proof}

\noindent{\bf Remark.}
As seen in figure \ref{apq2}, 
there will in general be pairs of points, $p$, $q$, for which the set
$A(p,q)$ 
%% is not small and 
does not agree with $I(p,q)$.  Such sets
$A(p,q)$ are still included in our basis, but this does no harm, since
by definition, any basis for a topology remains a basis when more sets
are added to it, provided that the additional sets are also open, which
of course the $A(p,q)$ are.

\section{From $K^+(p)$ to $I^+(p)$ and $J^+(p)$}
%% \section{Conjecture for how to get $I^+(p)$ from $K^+(p)$}
\label{four}

\def\Reals{\mathbb{R}}

Once we have access to the manifold topology, it is a relatively easy
matter to define continuous curve, and from there to characterize $I^\pm$
and $J^\pm$.  Thus, for example, a curve could be the image of a
continuous function from $[0,1]\subseteq\Reals$ into $\M$, and we might
define a causal (respectively timelike) curve as one which is linearly
ordered by $K^+$ (respectively $A^+$).

\mnote{RDS: alternatively, a curve which is locally within $K^{+}$ (resp $A^+$) of itself.  }

\mnote{RDS: it also seems plausible that $I^+(p)$ is simply the connected component of p in $A^+(p)$}

Nevertheless, it might be nice to characterize $I^+$ and $J^+$ more directly in terms of $K^+$.
We conclude with a conjecture of that nature (concerning $I^+$).
%% Below we state a conjecture for how to go from $K^+(p)$ to $I^+(p)$.

\begin{conjecture} 
  $K^+(p)\backslash S = I^+(p)$, where 
  $S=\{r\in K^+(p) | $ every `full chain' from $p$ to $r$ meets $\partial K^+(p)\}$
%% $\left(K^+(p)\backslash \partial K^+(p)\right)\backslash S=I^+(p)$, where $S:=\{r\in K^+(p) |$ every dense chain from $p$ to $r$ intersects $\partial K^+(p)\}$
\end{conjecture}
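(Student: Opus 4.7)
My plan is to prove both inclusions of $K^+(p)\setminus S=I^+(p)$ separately. The forward containment $I^+(p)\subseteq K^+(p)\setminus S$ is direct: given $r\in I^+(p)$, fix a timelike curve $\gamma:[0,1]\to\mathcal{M}$ from $p$ to $r$ and form the chain $C=\{p,r\}\cup\gamma(D)$ for a countable dense $D\subset(0,1)$. Every element of $C\setminus\{p,r\}$ lies in the open set $I^+(p)\cap I^-(r)\subseteq A^+(p)$, by Lemma~\ref{aopen}. Interpreting ``full chain'' as a maximal chain, I would extend $C$ by Zorn's Lemma, with each inserted element forced to lie in the open set $I^+(p)\cap I^-(r)$, so that no non-endpoint element of $\partial K^+(p)$ is ever introduced. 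The resulting full chain avoids $\partial K^+(p)$ away from its endpoints, placing $r$ outside $S$.

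The converse $K^+(p)\setminus S\subseteq I^+(p)$ I would tackle by contrapositive, so assume $r\in K^+(p)\setminus I^+(p)$ and suppose, toward a contradiction, that there is a full chain $C$ from $p$ to $r$ whose non-endpoint elements lie in $A^+(p)=\mathrm{int}\,K^+(p)$. The strategy is to promote $C$ to a continuous causal curve $\gamma$ from $p$ to $r$ and then invoke the local coincidence of $A^+(p)$ with $I^+(p)$ established in Section~\ref{three}. The promotion is the job of Lemma~\ref{top}: since order-suprema are topological limits, the Dedekind-MacNeille completion of $C$ yields a compact totally ordered subset of $\mathcal{M}$, which by standard chain-to-curve arguments carries a continuous causal parametrization $\gamma:[0,1]\to\mathcal{M}$. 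Fullness (order-density of $C$) together with the hypothesis that $C$ avoids $\partial K^+(p)$ should then confine the image of $\gamma|_{(0,1)}$ to $A^+(p)$: each limit-point in the completion is a topological limit of $C$-elements already in $A^+(p)$, and it cannot land on $\partial K^+(p)$ without contradicting the defining property~\eref{bkdef}, which demands approach from outside $K^+(p)$.

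With $\gamma$ so obtained, each $\gamma(t)$ for $t\in(0,1)$ lies in $A^+(p)$ and therefore, by the local identification of $A^+(p)$ with $I^+(p)$ given at the end of Section~\ref{three}, actually lies in $I^+(p)$. Picking $y=\gamma(1-\epsilon)\in I^+(p)$ for small $\epsilon$, and noting $r\in J^+(y)$ because $\gamma$ is causal, the classical implication ``$y\in I^+(p)$ and $r\in J^+(y)$ imply $r\in I^+(p)$'' delivers the desired contradiction. The main obstacle, and the reason this remains conjectural, is the promotion step: settling on the precise definition of ``full chain,'' extending the chain-to-curve correspondence from $J^+$ (where it is classical) to $K^+$ in a general $K$-causal spacetime, and verifying that the completion of $C$ cannot drift from $A^+(p)$ onto $\partial K^+(p)$. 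A secondary but necessary subtlety is the interpretation of ``meets $\partial K^+(p)$,'' which, in view of $p$ itself always lying on $\partial K^+(p)$ (as the sup of an increasing sequence from its chronological past), must be read as meeting at some non-endpoint chain element.
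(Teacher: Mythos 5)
Note first that the paper does not prove this statement: it is stated and deliberately left as a conjecture, so there is no proof of record to compare routes with; the only question is whether your argument actually closes it, and it does not. The decisive flaw is in your converse direction, where you pass from $\gamma(t)\in A^{+}(p)$ to $\gamma(t)\in I^{+}(p)$ by invoking the ``local identification'' of $A^{+}(p)$ with $I^{+}(p)$ from Section~\ref{three}. That identification is valid only within a sufficiently small neighborhood of $p$; globally $A^{+}(p)=\mathrm{int}\,K^{+}(p)$ can contain points that are not even in $J^{+}(p)$ --- the point $s$ of Figure~\ref{example} is the paradigm, lying in the interior of $K^{+}(p)$ while every causal curve from $p$ is blocked by the slits. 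If $A^{+}(p)\subseteq I^{+}(p)$ held globally, the conjecture would be nearly immediate, so this step assumes away exactly its hard content. A correct argument would need a propagation step along the chain (for instance an open-and-closed argument in small $K$-convex neighborhoods, where $K^{+}=J^{+}$ and $J^{+}\circ I^{+}\subseteq I^{+}$), combined with the step you yourself flag as missing: promoting a full chain to a continuous causal curve and showing that its limit points cannot land on $\partial K^{+}(p)$. The slit geometry is precisely where this is delicate --- an increasing sequence of chain points approaching a deleted point may fail to have a supremum in $\mathcal{M}$ at all, which is what the ``order-theoretically closed'' clause in the definition of a full chain is presumably meant to exploit --- and your appeal to a Dedekind--MacNeille completion sitting compactly inside $\mathcal{M}$ is asserted, not established.

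There are also problems in the forward inclusion. The paper defines a full chain as linearly ordered, order-theoretically closed, and dense; it does not say ``maximal,'' so your Zorn extension proves a statement with a different hypothesis, and your initial chain $C=\{p,r\}\cup\gamma(D)$ with $D$ countable is not order-closed (it omits the suprema $\gamma(t)$ for $t\notin D$). The natural repair is simply to take the full image $\gamma([0,1])$ of the timelike curve, whose non-endpoint points lie in $I^{+}(p)\subseteq\mathrm{int}\,K^{+}(p)$ and hence off $\partial K^{+}(p)$. Your observation that $p$ itself always lies in $\partial K^{+}(p)$ (it cannot be interior to $K^{+}(p)$ without violating $K$-causality) is genuine and shows the conjecture needs a reading of ``meets'' that discounts at least the initial endpoint; note, though, that your ``non-endpoint'' reading sits awkwardly with the paper's closing remark that $\partial K^{+}(p)$ is automatically contained in $S$, which requires the terminal endpoint $r$ to count. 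In sum, this is a reasonable sketch of a strategy, but the central implication remains unproved and the conjecture stays open.
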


\noindent 
Here, by a ``full chain from $p$ to $r$'' we mean a subset $C$ of $\M$ 
containing $p$ and $r$
that is:
 {\it\/linearly ordered\/} by $\prec$ (it is a chain);
 {\it\/order-theoretically closed\/} in the sense that it contains all its suprema and infima;
 and
 {\it\/dense\/} in the sense that it contains between any two of its points a third point 
 [$(\forall x,y\in C)(\exists z\in C)(z\neq x,y  \ \wedge \ x\prec z\prec y)$].

\noindent {\bf Remark.} We didn't need to remove $\partial K^+(p)$
explicitly from $K^+(p)$,  because it is automatically included within $S$.

\bigskip
\noindent\bf Acknowledgements: \rm 
This research was supported in part by Perimeter Institute for Theoretical Physics. Research at Perimeter
Institute is supported by the Government of Canada through the
Department of Innovation, Science and Economic Development Canada and by
the Province of Ontario through the Ministry of Research, Innovation and
Science. YY acknowledges support from the Avadh Bhatia Fellowship at the
University of Alberta. 
This research was supported in part by NSERC through grant RGPIN-418709-2012.

\begin{appendices}

\section{Lemmas from Sorkin-Woolgar (SW) \cite{Sorkin1}}
\label{AppendixA}
In this appendix we collect the lemmas from \cite{Sorkin1} that we have
used in this paper. 

\begin{lem} (Lemma 12 of SW):
 Let $\mathcal{U}$ and $\mathcal{O}$ be open subsets of $\mathcal{M}$ and
 $\mathcal{U}\subseteq \mathcal{O}$. For $p,q\in\mathcal{U}$,
 $p\prec_{\mathcal{U}} q$ implies $p\prec_{\mathcal{O}} q$. 
\label{lem12}
\end{lem}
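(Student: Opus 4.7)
The plan is to exploit the minimality characterization of $K^+$: namely that $K^+_{\mathcal{U}}$ is by definition the intersection of all closed transitive relations on $\mathcal{U}$ that contain $I^+_{\mathcal{U}}$. If I can show that the restriction
\[
R := K^+_{\mathcal{O}} \cap (\mathcal{U}\times\mathcal{U})
\]
is itself a closed transitive relation on $\mathcal{U}$ containing $I^+_{\mathcal{U}}$, then by minimality $K^+_{\mathcal{U}}\subseteq R \subseteq K^+_{\mathcal{O}}$, which is exactly what the lemma asserts.

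First I would verify the three required properties of $R$. Transitivity is immediate: if $p,q,r\in\mathcal{U}$ with $(p,q),(q,r)\in R$, then $(p,q),(q,r)\in K^+_{\mathcal{O}}$, so transitivity of $K^+_{\mathcal{O}}$ gives $(p,r)\in K^+_{\mathcal{O}}$, and since $p,r\in\mathcal{U}$ we get $(p,r)\in R$. Containment of $I^+_{\mathcal{U}}$ is also easy: any timelike curve lying in $\mathcal{U}$ lies a fortiori in $\mathcal{O}$, so $I^+_{\mathcal{U}}\subseteq I^+_{\mathcal{O}}\subseteq K^+_{\mathcal{O}}$, and intersecting with $\mathcal{U}\times\mathcal{U}$ gives $I^+_{\mathcal{U}}\subseteq R$.

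The one point I would flag as the main obstacle, at least in terms of care, is the notion of ``closed'' that enters the definition of $K^+$. The relation $K^+_{\mathcal{U}}$ is the smallest relation on $\mathcal{U}$ that is closed in the subspace topology on $\mathcal{U}\times\mathcal{U}$, transitive, and contains $I^+_{\mathcal{U}}$. So I need $R$ to be closed in $\mathcal{U}\times\mathcal{U}$, not in $\mathcal{O}\times\mathcal{O}$. But this is precisely what intersection with an open subspace delivers: since $\mathcal{U}$ is open in $\mathcal{O}$, the product $\mathcal{U}\times\mathcal{U}$ is open in $\mathcal{O}\times\mathcal{O}$, and if $K^+_{\mathcal{O}}$ is closed in $\mathcal{O}\times\mathcal{O}$ then its intersection with $\mathcal{U}\times\mathcal{U}$ is closed in the subspace topology on $\mathcal{U}\times\mathcal{U}$ by definition of the subspace topology.

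Putting these pieces together yields $K^+_{\mathcal{U}}\subseteq R$; and since $R\subseteq K^+_{\mathcal{O}}$, any pair $(p,q)$ with $p\prec_{\mathcal{U}} q$ satisfies $(p,q)\in K^+_{\mathcal{O}}$, i.e.\ $p\prec_{\mathcal{O}} q$. I expect the entire argument to be short; the only subtle ingredient is recognizing the topological fact used in the previous paragraph, so that the minimality argument can be invoked with the correct meaning of ``closed.''
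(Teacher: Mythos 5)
Your proof is correct, and it is essentially the standard argument: the paper itself quotes this lemma from Sorkin--Woolgar without reproving it, and the original proof there is exactly your minimality argument, namely that $K^+_{\mathcal{O}}\cap(\mathcal{U}\times\mathcal{U})$ is a relatively closed, transitive relation on $\mathcal{U}$ containing $I^+_{\mathcal{U}}$, so it contains $K^+_{\mathcal{U}}$ by the definition of $K^+$ as the smallest such relation. One small remark: the relative closedness of the intersection holds for any subspace, not because $\mathcal{U}\times\mathcal{U}$ is open in $\mathcal{O}\times\mathcal{O}$; the openness of $\mathcal{U}$ only matters to ensure its subspace topology is its manifold topology, so this does not affect the validity of your argument.
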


\begin{lem} (Lemma 13 of SW):
Let $\mathcal{U}$ and $\mathcal{O}$ be open subsets of $\mathcal{M}$ and
$\mathcal{U}\subseteq \mathcal{O}$. For $p,q\in\mathcal{U}$,
$p\prec_{\mathcal{O}} q$ implies $p\prec_{\mathcal{U}} q$, if
$\mathcal{U}$ is causally convex relative to $\prec_\mathcal{O}$. 
\label{lem13}
\end{lem}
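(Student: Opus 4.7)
The plan is to exploit the minimality of $\prec_\mathcal{U}$ together with the definition of $\prec_\mathcal{O}$ as the intersection of all closed transitive relations on $\mathcal{O}$ containing $I^+$ (cf.\ equation~\eref{kdef} applied to $\mathcal{O}$ as its own spacetime). Concretely, I would construct a relation $R$ on $\mathcal{O}$ that is closed (in $\mathcal{O}\times\mathcal{O}$), transitive, and contains $I^+\!\restriction_\mathcal{O}$, but whose restriction to $\mathcal{U}\times\mathcal{U}$ lies inside $\prec_\mathcal{U}$. Once such an $R$ is in hand, the defining property of $\prec_\mathcal{O}$ gives $\prec_\mathcal{O}\,\subseteq R$, whence for $p,q\in\mathcal{U}$ we get $p\prec_\mathcal{O} q\Rightarrow pRq\Rightarrow p\prec_\mathcal{U} q$, which is exactly the conclusion wanted.

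The natural candidate is the ``hybrid'' relation
\[
  xRy \iff
  \begin{cases} x\prec_\mathcal{U} y, & \text{if } x,y\in\mathcal{U},\\
                x\prec_\mathcal{O} y, & \text{otherwise.} \end{cases}
\]
Three things then need to be verified. Containment of $I^+\!\restriction_\mathcal{O}$ is immediate, using $I^+\!\restriction_\mathcal{U}\subseteq\prec_\mathcal{U}$ on one side and $I^+\!\restriction_\mathcal{O}\subseteq\prec_\mathcal{O}$ on the other. Closedness of $R$ in $\mathcal{O}\times\mathcal{O}$ reduces, since $\mathcal{U}$ is open, to a finite case analysis on whether the limit point lies in $\mathcal{U}\times\mathcal{U}$ or not; in each case we use closedness of $\prec_\mathcal{U}$ in $\mathcal{U}$ or of $\prec_\mathcal{O}$ in $\mathcal{O}$, together with Lemma~\ref{lem12} to promote a $\prec_\mathcal{U}$-statement to a $\prec_\mathcal{O}$-statement whenever we need to compare limits lying partly outside $\mathcal{U}$.

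The step I expect to be the main obstacle, and the only place the hypothesis of causal convexity enters, is transitivity. Given $xRy$ and $yRz$, the awkward configuration is when $x,z\in\mathcal{U}$ but the intermediate point $y$ is not. In that case the definition of $R$ forces $x\prec_\mathcal{O} y\prec_\mathcal{O} z$, so transitivity of $\prec_\mathcal{O}$ gives $x\prec_\mathcal{O} z$; but then causal convexity of $\mathcal{U}$ relative to $\prec_\mathcal{O}$ contradicts $y\notin\mathcal{U}$, so this case cannot actually occur. The remaining cases are routine: when the three points all lie in $\mathcal{U}$ we invoke transitivity of $\prec_\mathcal{U}$; when at least one endpoint lies outside $\mathcal{U}$ we reduce every hypothesis to $\prec_\mathcal{O}$ using Lemma~\ref{lem12} and apply transitivity of $\prec_\mathcal{O}$. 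Thus convexity is exactly what rules out the one pathological transitivity pattern, and once $R$ is established as closed, transitive, and containing $I^+\!\restriction_\mathcal{O}$, the conclusion follows by the minimality argument in the first paragraph.
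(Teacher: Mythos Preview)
The paper does not actually prove this lemma; it is quoted in the appendix as a result imported from Sorkin--Woolgar, with no argument supplied here. So there is no in-paper proof to compare against.

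That said, your strategy is sound and is essentially the standard proof (it is the argument used in the original Sorkin--Woolgar paper): build a hybrid relation $R$ on $\mathcal{O}$, verify it is closed, transitive, and contains $I^+$, and invoke minimality of $\prec_\mathcal{O}$. Your case analyses for closedness and transitivity are correct, and you are right that convexity is exactly what kills the bad transitivity pattern $x,z\in\mathcal{U}$, $y\notin\mathcal{U}$.

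One small point deserves tightening. Your claim that containment $I^+\!\restriction_\mathcal{O}\subseteq R$ is ``immediate'' glosses over a use of the convexity hypothesis. On $\mathcal{U}\times\mathcal{U}$ the relation $R$ equals $\prec_\mathcal{U}$, and $\prec_\mathcal{U}$ is by definition only guaranteed to contain $I^+_\mathcal{U}$, i.e.\ pairs joined by a timelike curve \emph{lying in $\mathcal{U}$}. A timelike curve in $\mathcal{O}$ between two points of $\mathcal{U}$ might a priori leave $\mathcal{U}$. It does not, precisely because $\mathcal{U}$ is $\prec_\mathcal{O}$-convex: every point on such a curve is $\prec_\mathcal{O}$-between its endpoints (since $I^+\subseteq\prec_\mathcal{O}$), hence lies in $\mathcal{U}$. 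Thus $I^+_\mathcal{O}\cap(\mathcal{U}\times\mathcal{U})\subseteq I^+_\mathcal{U}\subseteq\,\prec_\mathcal{U}$. So convexity enters twice, not once; with that noted, your argument goes through as written.
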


\begin{lem} (Lemma 14 of SW):
Let $S$ be a subset of $\mathcal{M}$ with compact boundary $\mathrm{Fr}(S)$, 
and let $x \prec y$ with $x \in S$, $y \notin S$ (or vice versa). 
Then $\exists w \in \mathrm{Fr}(S)$ such that $x \prec w \prec y$. 
\label{lem14}
\end{lem}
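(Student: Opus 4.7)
The plan is to exploit the definition of $K^+$ as the intersection of all closed, transitive relations containing $I^+$ (equation \eref{kdef}): it suffices to exhibit a single closed, transitive relation $R \supseteq I^+$ whose elements all satisfy the desired crossing property. Fixing $S$ with compact frontier $F$, I would define
\[ R := \{(a,b) \in K^+ : a \notin S^\circ,\ \text{or}\ b \in \overline{S},\ \text{or}\ \exists w \in F \text{ with } (a,w),(w,b) \in K^+\}. \]
If $R$ is shown to contain $I^+$ and to be both closed and transitive, then by the minimality of $K^+$ we obtain $K^+ \subseteq R$, hence $R = K^+$; specializing to pairs $(a,b)$ with $a \in S^\circ$ and $b \notin \overline{S}$, this is precisely the crossing conclusion of the lemma.

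First I would dispose of the boundary cases. Because $\prec$ is reflexive (per the convention adopted in the paper), if $x \in F$ the choice $w := x$ already witnesses $x \prec w \prec y$, and symmetrically if $y \in F$. So the only nontrivial configuration is $x \in S^\circ$ and $y \in (S^c)^\circ$, which is precisely what $R$ targets.

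The easier verifications are $I^+ \subseteq R$ and topological closedness of $R$. For the first, a timelike curve $\gamma$ from $a \in S^\circ$ to $b \in (S^c)^\circ$ has a ``first exit time'' $t^* := \sup\{t : \gamma(t) \in S^\circ\} \in (0,1)$, and continuity of $\gamma$ together with openness of $S^\circ$ and of $(S^c)^\circ$ forces $w := \gamma(t^*) \in \overline{S^\circ}\setminus S^\circ = F$, with $a \ll w \ll b$. For closedness, suppose $(a_n,b_n) \to (a,b)$ with $(a_n,b_n) \in R$, $a \in S^\circ$, and $b \notin \overline{S}$; openness of $S^\circ$ and $(S^c)^\circ$ places the pairs $(a_n,b_n)$ eventually in the same regime, yielding witnesses $w_n \in F$. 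Compactness of $F$ extracts a convergent subsequence $w_n \to w \in F$, and closedness of $K^+$ promotes the relations $(a_n,w_n)$ and $(w_n,b_n)$ to $(a,w)$ and $(w,b)$.

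The main obstacle is transitivity, which requires a case split according to where the middle point lies in the decomposition $\mathcal{M} = S^\circ \sqcup F \sqcup (S^c)^\circ$. Given $(a,b),(b,c) \in R$ with $a \in S^\circ$ and $c \notin \overline{S}$: if $b \in S^\circ$, the witness $w$ supplied by $(b,c) \in R$ satisfies $(a,w) \in K^+$ via transitivity of $K^+$ applied to $a \prec b \prec w$; if $b \in (S^c)^\circ$, the witness supplied by $(a,b) \in R$ works symmetrically via $w \prec b \prec c$; and if $b \in F$, the choice $w := b$ is immediate. In each branch the missing $K^+$-link is furnished by transitivity of $K^+$ itself, and the proof is complete.
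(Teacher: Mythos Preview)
The paper does not actually prove this lemma; it merely quotes it in the appendix as Lemma~14 of Sorkin--Woolgar \cite{Sorkin1}, so there is no in-paper proof to compare against. That said, your argument is essentially the one Sorkin and Woolgar use in the original reference: one exploits the minimality clause in the definition of $K^+$ by exhibiting an auxiliary relation $R\supseteq I^+$ that is closed and transitive and already encodes the crossing property, then concludes $K^+\subseteq R$.

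Your execution is sound. A couple of small remarks. First, the line ``$\gamma(t^*)\in\overline{S^\circ}\setminus S^\circ=F$'' is not literally an equality: $\overline{S^\circ}\setminus S^\circ$ is $\partial(S^\circ)$, which is only \emph{contained in} $\partial S=F$; but containment is all you need, so the conclusion stands. Second, you should note explicitly that the ``vice versa'' case ($x\notin S$, $y\in S$) follows by applying the same argument with $S$ replaced by its complement, since $\mathrm{Fr}(S^c)=\mathrm{Fr}(S)=F$ is again compact. Finally, your appeal to reflexivity of $\prec$ for the boundary cases is legitimate here (the paper's parenthetical ``(reflexive) partial ordering'' confirms this convention), though one could also avoid it by folding those cases into the definition of $R$.
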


\begin{lem} (Lemma 16 of SW):
If $\mathcal{M}$ is K-causal then every element of $\mathcal{M}$ possesses arbitrarily small K-convex open neighborhoods ($\mathcal{M}$ is locally K-convex).
\label{lem16}
\end{lem}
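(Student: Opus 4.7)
My plan is to verify the basis criterion stated immediately before the theorem: for every $x\in\M$ and every open $U\ni x$, exhibit $p$ and $q$ with $x\in A(p,q)\subseteq U$. Since the criterion is manifestly local, I am free to shrink $U$ as much as I like. The strategy is to leverage two inputs already developed in the paper, namely the local equivalence $A(p,q)=I(p,q)$ noted right after Lemma \ref{aopen}, together with the classical fact that under strong causality the Alexandrov intervals $I(p,q)$ form a basis for the manifold topology.

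First I would invoke Minguzzi's theorem (footnoted in the paper) that K-causality implies stable, and hence strong, causality. A standard result (e.g.\ Theorem 4.24 of \cite{rogerSIAM}) then says that under strong causality the Alexandrov topology coincides with the manifold topology, so the intervals $I(p,q)$ are a basis for $\M$. In particular, for the given $x$ and $U$ there exist $p,q$ with $x\in I(p,q)\subseteq U$.

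Next I would upgrade this witness from $I(p,q)$ to $A(p,q)$. Combining Lemmas \ref{lem2} and \ref{aopen}, any sufficiently small K-convex neighborhood $V$ of $x$ satisfies $A^+ = \mathrm{int}\,K^+ = \mathrm{int}\,J^+ = I^+$ on $V$, and similarly for the past set. By first shrinking $U$ to lie inside such a $V$ (which exists by Lemma \ref{lem16}) and then applying the Alexandrov-basis step inside $V$, I can arrange that $p$, $q$, and the whole interval $I(p,q)$ sit inside $V$. On $V$ the past and future $A$-sets agree with the corresponding $I$-sets, so $A(p,q)=I(p,q)$ there, giving $x\in A(p,q)\subseteq U$ as required.

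The main subtlety I anticipate is simply the bookkeeping of ``small enough'': one must ensure that the local identification $A(p,q)=I(p,q)$ really applies to the specific $p,q$ delivered by the Alexandrov step. Lemmas \ref{lem12}--\ref{lem13} allow $K^+$ to be relativized to a K-convex $V$ without change, so once everything has been placed inside a single K-convex neighborhood where $K^+=J^+$ locally (Lemma \ref{lem2}), the interior of $K^+(p)$ coincides with $I^+(p)$ and the argument closes. Beyond this shrinking, the proof is just a direct chaining of facts already established.
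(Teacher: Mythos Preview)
Your proposal does not address the statement in question. Lemma~\ref{lem16} asserts that in a K-causal spacetime every point admits arbitrarily small K-convex open neighborhoods. What you have written instead is an argument that the sets $A(p,q)$ form a basis for the manifold topology, which is the main Theorem of Section~\ref{three}, not Lemma~\ref{lem16}. Indeed, you explicitly invoke Lemma~\ref{lem16} as an ingredient (``which exists by Lemma~\ref{lem16}''), so read as an attempted proof of that lemma your argument is circular.

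For the record, the paper itself does not supply a proof of Lemma~\ref{lem16}; it is quoted in the appendix from Sorkin--Woolgar~\cite{Sorkin1}, where the actual argument resides. A genuine proof must show, directly from K-causality, that around each point one can find open neighborhoods $U$ with the property that $p,q\in U$ and $p\prec r\prec q$ force $r\in U$. Nothing in your outline---Minguzzi's equivalence, the Alexandrov basis, or the local identification $A(p,q)=I(p,q)$---establishes this convexity property; on the contrary, those facts presuppose it (Lemmas~\ref{lem12}--\ref{lem13} and Lemma~\ref{lem2} all operate inside a K-convex neighborhood furnished by Lemma~\ref{lem16}). As a proof of the basis theorem your sketch is essentially the paper's own argument, but that is a different statement.
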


 \end{appendices}

\bibliographystyle{iopart-num}
\bibliographystyle{plain}

\section*{References}

\bibliography{kcausality}
\nocite{*}

\end{document}